
\documentclass[letterpaper, 10 pt, conference]{ieeeconf}  

\IEEEoverridecommandlockouts                              

\overrideIEEEmargins                                      



\usepackage{hyperref}
\usepackage{epsfig} 
\usepackage{times} 
\usepackage{amsmath} 
\usepackage{amssymb}  
\usepackage{comment}
\usepackage{amsmath}
\usepackage{mathtools}
\usepackage{graphics,graphicx,color}
\usepackage{dsfont}

\newtheorem{theorem}{Theorem}

\newtheorem{lemma}[theorem]{Lemma}
\newtheorem{definition}{Definition}[section]
\newtheorem{assumption}{Assumption}
\newtheorem{remark}{Remark}
\newtheorem{example}{Example}

\newcommand{\Sp}[1]{\text{sp}\left(#1\right)}
\newcommand{\s}{\mathcal{S}}
\newcommand{\Trace}{\text{tr}}

\newcommand{\N}{\mathcal{N}}
\newcommand{\TX}{\b{\Phi}_{x}}
\newcommand{\TU}{\b{\Phi}_{u}}
\newcommand{\tx}{\Phi_{x}}
\newcommand{\tu}{\Phi_{u}}

\renewcommand{\b}[1]{\mathbf{#1}}

\newcommand{\I}{\mathcal{I}}
\newcommand{\B}{\mathcal{B}}
\newcommand{\A}{\mathbb{A}}

\newcommand{\X}{\mathcal{X}}

\newcommand{\rTX}{\tilde{\b{\Phi}}_{x}}
\newcommand{\rTU}{\b{\tilde{\Phi}}_{u}}
\newcommand{\bxb}{\tilde{\b{\Phi}}_{x,b}^j}

\newcommand{\inxb}{\tilde{\b{\Phi}}_{x,n}^j}

\newcommand{\bx}{\tilde{\Phi}_{x,b}^j}
\newcommand{\bu}{\tilde{\Phi}_{u,b}^j}
\newcommand{\inx}{\tilde{\Phi}_{x,n}^j}

\newcommand{\0}{\underbar{0}}
\newcommand{\rtu}{\tilde{\Phi}_{u}}
\newcommand{\rtx}{\tilde{\Phi}_{x}}
\newcommand{\AK}{A_K^\ell}
\newcommand{\BK}{B_K^\ell}
\newcommand{\CK}{C_K^\ell}
\newcommand{\DK}{D_K^\ell}

\title{\LARGE \bf
Localized and Distributed $\mathcal{H}_2$ State Feedback Control}

\author{Jing Yu$^{1}$, Yuh-Shyang Wang$^{2}$ and James Anderson$^{3}$
\thanks{$^{1}$ Department of Computing and Mathematical Science, California Institute of Technology, Pasadena, CA. jing@caltech.edu}
\thanks{$^{2}$ Argo AI, Pittsburgh, PA. yswang@argo.ai }
\thanks{$^{3}$ Department of Electrical Engineering and the Data Science Institute at Columbia University, New York, NY. james.anderson@columbia.edu}}

\begin{document}

\maketitle
\thispagestyle{empty}
\pagestyle{empty}

\begin{abstract}
        Distributed linear control design is crucial for large-scale cyber-physical systems. It is generally desirable to both impose information exchange (communication) constraints on the distributed controller, and to limit the propagation of disturbances to a local region without cascading to the global network (localization). Recently proposed System Level Synthesis (SLS) theory provides a framework where such communication and localization requirements can be tractably incorporated in controller design and implementation. In this work, we derive a solution to the localized and distributed $\mathcal{H}_2$ state feedback control problem without resorting to Finite Impulse Response (FIR) approximation. Our proposed synthesis algorithm allows a column-wise decomposition of the resulting convex program, and is therefore scalable to arbitrary large-scale networks. We demonstrate superior cost performance and computation time of the proposed procedure over previous methods via numerical simulation.

\end{abstract}


\section{Introduction}
\label{sec:intro}
Large-scale interconnected systems often demand control designs that comply with structural constraints with respect to communication and interaction. These requirements become especially crucial in engineering applications such as power grids \cite{fang2011smart} and vehicle platoons \cite{li2015overview}. 
Collectively, the challenge of designing controllers subject to these constraints is referred to as distributed or structured control \cite{han2003lmi}. It is known that distributed control problems are in general non-convex. Special cases of distributed control problems, such as those satisfying Quadratic Invariance (QI) \cite{rotkowitz2005characterization}, have been shown to have an exact convex reformulation. 
Therefore, previous works mostly focus on structured controller design in the QI setting. As noted in \cite{wang2014localized}, QI requires global information exchange for strongly connected plants such as a chain system. This imposes limitations on the scalability of the synthesis procedure, and the implementation of the distributed controllers. Particularly, \cite{wang2019system} explored cases where one wishes to go beyond QI conditions and observed that solutions leveraging QI can be more complex to synthesize than its central counterpart \cite{lamperski2015optimal}, thus not scalable to large-scale networks. As the state dimensions of the control systems grow, two control design requirements emerge: (1) \textit{Localization}: It it desirable that the effects of disturbances are limited to a predefined local region without cascading to the global network. (2) \textit{Distributed implementation}: Controller implementation needs to be distributed, allowing only sparse and local information to be exchanged between controllers. 

The first requirement is crucial for systems such as power grids where cascading failures can cause socioeconomic devastation \cite{hines2009cascading}. The second requirement might be imposed even when global information is available to local controllers as computation of local control actions using global information can become intractable in large networks. In this article, we tackle the class of structured control problems subject to these two constraints. In particular, we focus on the state feedback $\mathcal{H}_2$ optimal control setting. Under the QI framework, a large body of work has developed solutions to the state feedback $\mathcal{H}_2$ control problems subject to information sharing constraints \cite{LesL15,fardad2014design,kashyap2019explicit}. However, the \textit{localization} constraints were only recently considered in \cite{anderson2019system} and \cite{anderson2017structured} and motivate further investigation.  




In this work, we present a scalable solution to the \textit{localized} and \textit{distributed} $\mathcal{H}_2$ optimal control problem. We extend previous results that use finite-horizon approximation \cite{wang2014localized,anderson2017structured} to the infinite-horizon case and relieve several assumptions such as the block diagonal control matrix. Further, We provide details of the distributed implementation and computation of the controller leveraging the System Level Synthesis parameterization of the closed-loop maps \cite{wang2019system,ho2020system}. The resulting controller confines disturbances in a local neighborhood while constraining the information exchange among subsystems to a user-specified pattern. 
\paragraph*{Notation}Latin letters $x\in \mathbb{R}^n$ and $A \in \mathbb{R}^{m\times n}$ present vectors and matrices respectively. $A(i,j)$ refers to the $(i,j)^{\text{th}}$ element of the matrix. We use $A(:,j)$ and $A(j,:)$ to refer to the $j^{\text{th}}$ column and $j^{\text{th}}$ row of $A$ respectively. Bold font $\b{x}$ denotes the signal vector sequence $\b{x}:=\{x[t]\}_{t=1}^\infty$. Transfer matrices $\mathbf{G}(z) \in \mathbb{C}^{n\times m}$ have a spectral decomposition $\mathbf{G}(z) = \Sigma_{i = 0}^{\infty} z^{-i}G[i]$ where $G[i] \in \mathbb{R}^{n \times m}$. 
The $j^{\mathrm{th}}$ standard basis vector is $e_j \in \mathbb{R}^n$.
$\Sp{\cdot}$ is the support of a matrix.
For two binary matrices $S_1,S_2 \in \{0,1 \}^{m\times n}$, the operation $S_1 \cup S_2$ performs an element-wise OR operation. Given the matrix $A$, we say $\Sp{A} \subseteq S_1$ if $\Sp{A} \cup S_1 = S_1$. 
We abbreviate the set $\{1,2,\dots,N\}$ as $[N]$ for $N \in \mathbb{N}$.

\section{The Localized and Distributed $\mathcal{H}_2$ Problem}
We consider interconnected systems consisting of $N$ subsystems. For each subsystem $i$, let $x^i \in \mathbb{R}^{n_i}$, $u^i \in \mathbb{R}^{m_i}$, $w^i \in \mathbb{R}^{n_i}$ be the local state, control, and disturbance vectors respectively. Each subsystem $i$ has dynamics:
\begin{equation*}
        \label{eq:local_sys}
        x^i[t] = \sum_{j\in \N^x(i)} A^{ij}x^j[t-1] + \sum_{j \in \N^u(i)} B^{ij}u^j[t-1] + w^i[t]
\end{equation*}
where we write $j \in \N^x(i)$ if the states $x^j$ of subsystem $j$ affect the states of subsystem $i$ through the open-loop network dynamics. Similarly, we denote $j\in \N^u(i)$ if the control action $u^j$ of subsystem $j$ influence the states of subsystem $i$. In addition, the open-loop network interconnection pattern will be denoted as $\A\in \{0,1\}^{N \times N}$: 
\begin{equation*}
\A(i,j) = \begin{cases}
        1 & \text{if } j \in \N^x(i)\\
        0 & \text{otherwise}.
\end{cases}
\end{equation*}
Stacking the dynamics of all subsystems, we can represent the global network dynamics as 
\begin{equation}
        \label{eq:network_sys}
        x[t] = Ax[t-1] + Bu[t-1] + w[t].
\end{equation}
\begin{example}
        \label{ex:1}
        Consider  a chain network as shown in Figure \ref{fig:example1}. Each subsystem $i$ has its local plant $P_i$ and controller $C_i$ with scalar states $x^i$ and control actions $u^i$. For each $i$, $\N^x(i)$ only contains its nearest neighbors.
        \begin{figure}
                \centering
                \includegraphics[scale = 0.43]{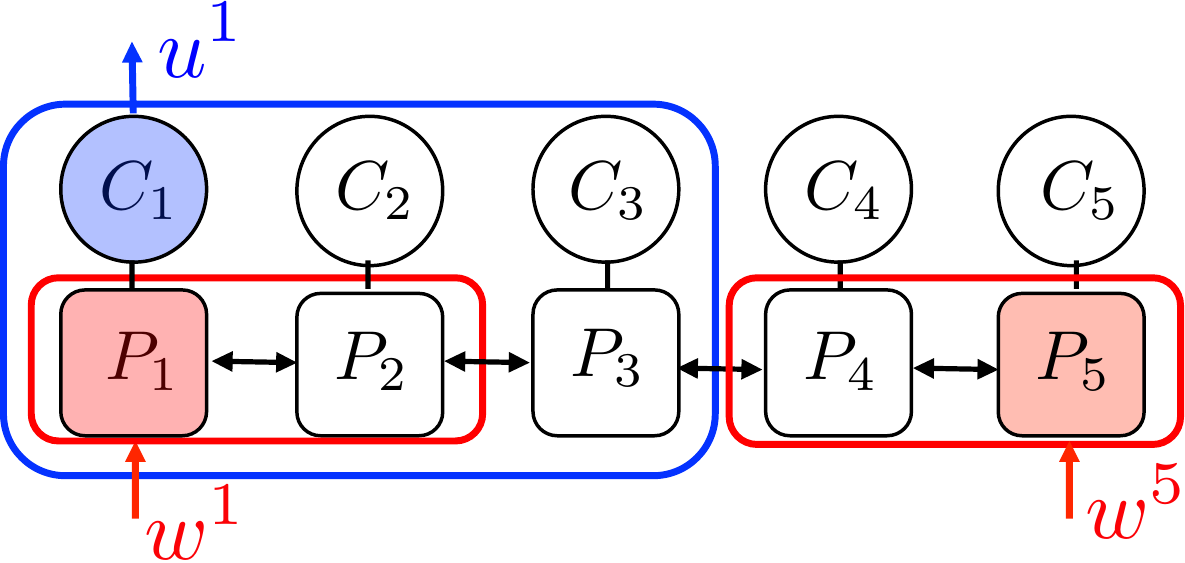}
                \caption{Scalar chain network for dynamic \eqref{eq:chain} with localization and communication requirement that $\s^L = \A$ and $\s^C = \s^{L,e}$. }
                \label{fig:example1}
        \end{figure}
        The stacked network dynamics \eqref{eq:network_sys} for this system has tri-diagonal state propagation matrix $A$ and diagonal $B$ matrix:
        \begin{align}
                \label{eq:chain}
                A &= 
                \begin{bmatrix}
                        *&*&0&0&0\\
                        *&*&*&0&0\\
                        0&*&*&*&0\\
                        0&0&*&*&*\\
                        0&0&0&*&*
                \end{bmatrix} \quad 
                B = \begin{bmatrix}
                        *&0&0&0&0\\
                        0&*&0&0&0\\
                        0&0&*&0&0\\
                        0&0&0&*&0\\
                        0&0&0&0&*
                \end{bmatrix}.
        \end{align}
\end{example}

\subsection{Localization}
\label{sec:localization}
It is often desirable to limit the effects of disturbances in \eqref{eq:network_sys} to a local region for a large network. One may specify the disturbance localization pattern with a binary matrix.
\begin{definition}[Disturbance Localization]
        \label{def:localization}
        The closed-loop of \eqref{eq:network_sys} is said to satisfy disturbance localization according to $\s^L \in \{0,1\}^{N\times N}$ if the following holds: Disturbance $\b{w}^j$ entering subsystem $j$ can propagate to the states $\b{x}^i$ at subsystem $i$ if and only if $\s^L(i,j)\not = 0$.
\end{definition}

\begin{example}
        \label{ex:localization}
        As an example of Definition \ref{def:localization}, consider Figure \ref{fig:example1} with dynamics \eqref{eq:chain}. Let us constrain the closed-loop localization of this chain network to $\s^L = \A$. This means that each local disturbance $\b{w}^i$ can only spread to the set $\N^x(i)$. According to the sparsity of $\A$ in \eqref{eq:chain}, the closed-loop satisfies disturbance localization according to $\s^L$ if disturbance $\b{w}^1$ entering at subsystem $P_1$ can propagate only to $P_2$, while $P_3$ through $P_5$ remain unaffected. Similarly, perturbations entering at $P_5$ only disturb $P_4$ and $P_5$.
\end{example}

We call the subsystems that can be affected by $\b{w}^i$ the \textit{localized region} of $\b{w}^i$. Elements in the localized region of $\b{w}^i$ corresponds to the nonzero elements of the $i^{\text{th}}$ \textit{column} of $\s^L$.
In example \ref{ex:1}, localized region for  $\b{w}^i$ is $\N^x(i)$. An equivalent requirement of disturbance localization per Definition \ref{def:localization} is that the ``boundary" subsystems of each localized region remain at zero to prevent disturbances from propagating outside of the localized region. To this end, we formalize the notion of the boundary subsystems.
\begin{definition}[Extended Localization Pattern]
        \label{def:extend}
Given sparsity pattern $\s^L$ for disturbance localization, the extended localization pattern is $\s^{L,e} = \Sp{\A\s^L}$.
\end{definition}

Matrix $\s^{L,e}$ can be interpreted as the propagation of $\s^L$ according to dynamics \eqref{eq:network_sys} if no action were to be taken to contain the spread of disturbances. 
We now define the boundary subsystems for a given localization pattern $\s^L$.
\begin{definition}[Boundary Subsystems]
        \label{def:boundary}
        The set of the boundary subsystems for the localized region of $\b{w}^i$ is 
        \begin{equation*}
                \B(i) := \{j\in [N]\,\,|\,\, \s^{L,e}(j,i) - \s^{L}(j,i) \not = 0\}.
        \end{equation*}
\end{definition}
Intuitively, the set $\B(i)$ for the localized region of $\b{w}^i$ contains the indices of the bordering subsystems that controls the spread of the disturbance from within the localized region to the outside of the region. 
\begin{example}
\label{ex:boundary}
We continue with Example \ref{ex:localization} with dynamics \eqref{eq:chain} where $\s^L = \A$. With Definition \ref{def:extend} and \ref{def:boundary}, we have:
\begin{equation*}
    \s^{L,e} = \begin{bmatrix}
    1&1&1&0&0\\
    1&1&1&1&0\\
    1&1&1&1&1\\
    0&1&1&1&1\\
    0&0&1&1&1
    \end{bmatrix}, 
    \s^{L,e} - \s^{L} = \begin{bmatrix}
    0&0&1&0&0\\
    0&0&0&1&0\\
    1&0&0&0&1\\
    0&1&0&0&0\\
    0&0&1&0&0
    \end{bmatrix}.
\end{equation*}
The boundary index set $\B(i)$ thus corresponds to the position of nonzero elements on the $i^{\text{th}}$ \textit{column} of $\s^{L,e} - \s^{L}$. For instance, $\B(3) = \{1,5\}$ and $\B(1) = \{3\}$.
\end{example}


\subsection{Distributed Implementation}
\label{sec:communication}
Controllers for large networks are generally required to have distributed implementation. This means that each \textit{local} controller for subsystems only has access to information from its neighboring subsystems. We denote the information about subsystem $j$ at time $t$ as $\I^j_t$ that includes all past states, control actions and controller internal states at subsystem $j$ up to time $t$. Given \textit{a priori} specified sparsity pattern for communication among subsystems, we have: 
\begin{definition}[Distributed Communication]
        \label{def:comm}
        A controller $\b{K}$ for \eqref{eq:network_sys} is said to conform to the communication constraint $\s^C \in \{0,1\}^{N\times N}$ if the following holds: Subsystem $i$ at time $t$ has access to information set $\I^j_t$ from subsystem $j$ for all $t \in \mathbb{N}$ if and only if $\s^C(i,j) \not = 0$.
\end{definition}


\begin{example}
        Consider Figure \ref{fig:example1} with dynamics \eqref{eq:chain}. Let the communication pattern in this case be $\s^C = \s^{L,e}$ as shown in Example \ref{ex:boundary}, which is the minimum communication requirement for $\s^L$ to be achievable. At every time step, control actions $\b{u}^1$ generated by subsystem $1$ depends on the information from subsystem 1,2,and 3 as shown in Figure \ref{fig:example1}. Similarly, we need information from subsystem 1, 2, 3, and 4 for $\b{u}^2$ by the sparsity pattern of $\s^{L,e}$. 
\end{example}

\subsection{Problem Statement}
\label{sec:problem}
We now state the localized and distributed state feedback $\mathcal{H}_2$ problem. We want to minimize the $\mathcal{H}_2$ performance index of output $\b{z} = Q^{\frac12}\b{x} + R^{\frac12} \b{u} $ of the closed-loop of \eqref{eq:network_sys}. The disturbance, i.e., the $w[t]$'s are assumed to be independently and identically distributed and drawn from $\N(0,I)$, and $Q^{\frac12},R^{\frac12} \succ 0$. Denote $\b{x} = \{x[t]\}_{t=0}^\infty$, $\b{u} = \{u[t]\}_{t=0}^\infty$. The objective is to search for a controller that localizes the closed-loop response and possesses a distributed implementation. We write this as the following optimization problem:
\begin{subequations}
        \begin{align}
                & \underset{\b{K}}{\text{minimize}} 
                & & \mathbb{E}_{w[t] \sim \mathcal{N}(0,I)}\left\|\begin{bmatrix}
                        Q^{\frac12} & 0\\ 0 & R^{\frac12}
                \end{bmatrix}  
                \begin{bmatrix}
                        \b{x} \\\b{u}
                \end{bmatrix} 
                \right\|_{\mathcal{L}_2}^2  \label{eq:h2} \tag{P0} \\
                & \text{subject to} 
                & & x[t] = Ax[t-1] + Bu[t-1] + w[t]\nonumber\\
                &&& \b{u} = \b{K}\b{x},\,\, \b{K} \text{ internally stabilizing} \nonumber\\
                &&&  \b{K} \text{ localizes closed-loop according to $\s^L$} \label{eq:localization-const} \\
                &&& \b{K} \text{ conforms to the communication constraint}\nonumber \\
                &&& \,\,\,\,\,\, \text{ according to $\s^C$.} \label{eq:comm-const} 
        \end{align}
\end{subequations}
where $\|\b{x}\|^2_{\mathcal{L}_2} := \sum_{k = 0}^\infty \|x[k]\|_2^2$ denotes the norm on signals in the $\mathcal{L}_2$ space. We assume $(A,B)$ is stabilizable.
Problem \ref{eq:h2} has practical application in large-scale cyber-physical systems such as power systems \cite{wang2018separable,tseng2020deployment}.
We note that in contrast to all previously formulated SLS problems, there is no FIR constraint in~\ref{eq:h2}.

\section{Preliminaries on System Level Synthesis}
\label{sec:sls}
Before the development of the solution to Problem \eqref{eq:h2}, we first review the System Level Synthesis framework \cite{anderson2019system} that has seen much success in distributed \cite{wang2018separable}, nonlinear \cite{yu2020achieving}, MPC \cite{alonso2020explicit}, and adaptive \cite{ho2019scalable} control design.

Consider the closed-loop dynamics of \eqref{eq:network_sys} under a linear feedback law $\b{u} = \b{K}\b{x}$. We denote the closed-loop mappings (CLMs) from disturbance $\b{w}$ to $\b{x}$ and $\b{u}$ by $\TX$,$\TU$ respectively, i.e., $\left[\begin{array}{c}\mathbf{x}\\\mathbf{u}\end{array}\right] = \left[\begin{array}{c}\TX\\\TU\end{array}\right] \mathbf{w}$. Let $N_x = \sum_i^N n_i$ and $N_u = \sum_i^N m_i$. Then $\TX(k,j)$ and $\TU(l,j)$ are the impulse response transfer function from $w(j)$ to $x(k)$ and $u(l)$ for $k \in \left[N_x\right]$ and $l \in \left[N_u\right]$. The closed-loop mappings $\TX$ and $\TU$ can be explicitly represented as $\TX = z(zI-A-B\b{K})^{-1}$ and $\TU = z\b{K}(zI-A-B\b{K})^{-1}$ after performing the $\mathcal{Z}$-transformation of the closed loop of \eqref{eq:network_sys}. Note we have followed convention in nonlinear SLS theory \cite{ho2020system} where $\TX$ and $\TU$ are \textit{causal} operators.
The System Level Synthesis (SLS) framework introduces a novel parametrization of all such achievable CLMs under internally stabilizing controllers $\b{K} \in \mathcal{RH}_\infty$. Crucially, SLS allows re-parameterization of any stabilizing controllers to be expressed and implemented with CLMs. Instead of searching for controller $\b{K}$, one looks for desirable closed-loop responses $\TX,\TU$ and recovers the controller transfer function that realizes these closed-loop behaviors as $\b{K} = \TU (\TX)^{-1}$. 
This is formalized as the following result adapted from \cite{wang2019system}. 
\begin{theorem}[\cite{wang2019system}]
        \label{thrm:SLS}
        For the dynamics \eqref{eq:network_sys}, the affine subspace in variables $\TX$ and $\TU$ defined by 
\begin{subequations}
\label{eq:feasibility}
        \begin{align}
                \tx[0] &= I, \quad \TX,\TU \in \mathcal{RH}_\infty  \label{eq:td-feasibility1}\\
                \tx[t+1] &= A\tx[t] + B\tu[t], \label{eq:td-feasibility2}
        \end{align}
\end{subequations}
             characterizes all closed-loop mappings achievable by an internally stabilizing controller. Moreover, for any $\TU,\TX$ satisfying \eqref{eq:feasibility}, controller $\mathbf{K} = \TU (\TX)^{-1}$ achieves the desired closed-loop responses $\TX$,$\TU$, is internally stabilizing, and can be implemented equivalently as 
            \begin{subequations}
        \label{eq:td-controller}
        \begin{align}
        u[t] &=  \sum_{k = 0}^t \tu[k] \hat{w}[t-k], \quad \hat{w}[0] = x[0] \label{eq:sls-1}\\
        \hat{w}[t+1] &= x[t+1] - \sum_{k = 1}^{t+1} \tx[k] \hat{w}[t+1-k]  \label{eq:sls-2}
\end{align}
\end{subequations}
            where $\b{\hat{w}}$ is the internal state of the controller.
\end{theorem}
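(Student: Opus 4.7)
The plan is to split the theorem into three claims: (i) the necessity direction, that every internally stabilizing $\mathbf{K}$ produces CLMs satisfying \eqref{eq:feasibility}; (ii) the sufficiency direction, that every $\TX,\TU$ satisfying \eqref{eq:feasibility} is realized by $\mathbf{K} = \TU\TX^{-1}$, which is internally stabilizing; and (iii) the implementation equivalence, that \eqref{eq:td-controller} produces the same input–output behavior as the feedback law $\mathbf{u} = \mathbf{K}\mathbf{x}$.

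For (i), I would start with an internally stabilizing $\mathbf{K}\in\mathcal{RH}_\infty$, form the CLMs $\TX = z(zI-A-B\mathbf{K})^{-1}$ and $\TU = z\mathbf{K}(zI-A-B\mathbf{K})^{-1}$, and note that internal stability places both in $\mathcal{RH}_\infty$. Multiplying the identity $(zI - A - B\mathbf{K})\TX = zI$ and using $\TU = \mathbf{K}\TX$ gives $z\TX - A\TX - B\TU = zI$. Extracting the pulse-response coefficients and matching powers of $z^{-1}$ yields $\tx[0] = I$ together with $\tx[t+1] = A\tx[t] + B\tu[t]$, i.e.\ \eqref{eq:td-feasibility1}--\eqref{eq:td-feasibility2}.

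For (ii), given $\TX,\TU\in\mathcal{RH}_\infty$ obeying \eqref{eq:feasibility}, the constraint $\tx[0]=I$ makes $\TX$ a unit in the ring of proper rational matrices, so $\mathbf{K}:=\TU\TX^{-1}$ is well defined. Rewriting \eqref{eq:td-feasibility2} in the frequency domain as $(zI-A)\TX - B\TU = zI$ and substituting $\TU = \mathbf{K}\TX$ recovers $(zI - A - B\mathbf{K})\TX = zI$, so the closed loop under $\mathbf{K}$ reproduces $\TX$ and $\TU$. Internal stability then follows by expressing each of the four closed-loop maps (from process and measurement disturbances to $\mathbf{x}$ and $\mathbf{u}$) as affine combinations of $\TX$, $\TU$, and constant matrices, all of which lie in $\mathcal{RH}_\infty$ by assumption.

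For (iii), I would introduce the auxiliary signal $\hat{\mathbf{w}}:=\TX^{-1}\mathbf{x}$. Solving $\TX\hat{\mathbf{w}}=\mathbf{x}$ causally in the time domain, using $\tx[0]=I$, yields precisely the recursion \eqref{eq:sls-2} with $\hat{w}[0]=x[0]$; and $\mathbf{u}=\TU\hat{\mathbf{w}}$ reads in the time domain as \eqref{eq:sls-1}. Since $\mathbf{x}=\TX\mathbf{w}$ in the closed loop, invertibility of $\TX$ forces $\hat{\mathbf{w}}=\mathbf{w}$, confirming that \eqref{eq:td-controller} matches the feedback law $\mathbf{u}=\mathbf{K}\mathbf{x}$. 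The principal subtlety, and the step I expect to need the most care, is the internal stability argument in (ii): $\mathbf{K}=\TU\TX^{-1}$ need not itself be stable, so stability must be established at the level of the four closed-loop transfer matrices rather than by coprime-factorization tests on $\mathbf{K}$ alone.
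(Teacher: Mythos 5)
The paper gives no proof of Theorem~\ref{thrm:SLS}; it is imported verbatim from \cite{wang2019system}, and your three-part argument (necessity by matching spectral coefficients in $(zI-A-B\mathbf{K})\TX = zI$, sufficiency by inverting $\TX$ via $\tx[0]=I$ and checking internal stability through the closed-loop transfer matrices rather than through $\mathbf{K}$ itself, and the implementation via $\hat{\b{w}} = (\TX)^{-1}\b{x}$) is precisely the standard proof given in that reference. Your proposal is correct, and you rightly flag the internal-stability step as the one requiring care, since $\mathbf{K}=\TU(\TX)^{-1}$ need not be stable.
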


Controller \eqref{eq:td-controller} can be regarded as estimating past disturbances in \eqref{eq:sls-2} and acting upon the estimated disturbances according to a specified closed-loop mapping $\TU$ in \eqref{eq:sls-1}. An important consequence of Theorem \ref{thrm:SLS} is that any structures imposed on the closed-loop responses $\TX,\TU$ satisfying \eqref{eq:feasibility}, such as sparsity constraints on the spectral elements of $\TX,\TU$, trivially translate into structures on the realizing controllers \eqref{eq:td-controller} that achieve the designed responses. 

Note constraint \eqref{eq:localization-const} and \eqref{eq:comm-const} in \ref{eq:h2} can be equivalently expressed in terms of the CLMs of the closed loop of \eqref{eq:network_sys}. We first define what it means for CLMs of \eqref{eq:network_sys} to conform to localization and communication sparsity patterns. 
\begin{definition}[Sparsity of CLMs]
        \label{def:clm-sparsity}
        A CLM $\TX \in \mathbb{C}^{\sum_i n_i \times \sum_i n_i}$ for \eqref{eq:network_sys} satisfies $\TX \in \s^L$ if for all $k \in \mathbb{N}$, $\Sp{\tx[k]}$ is a block matrix with $(i,j)^{\text{th}}$ block being $\mathds{1}_{n_i \times n_j}$ when $\s^L(i,j) = 1$, and $\0_{n_i \times n_j}$ when $\s^L(i,j) = 0$. 
        Similarly, $\TU \in \s^C$ for $\TU \in \mathbb{C}^{\sum_i m_i \times \sum_i n_i}$, if for all $k$, $\Sp{\tu[k]}$ is a block matrix with $(i,j)^{\text{th}}$ block being $\mathds{1}_{m_i \times n_j}$ when $\s^C(i,j) = 1$, and $\0_{m_i \times n_j}$ when $\s^C(i,j) = 0$.
\end{definition}

Constraint \eqref{eq:localization-const} is equivalent to requiring $\TX \in \s^L$ by definition. On the other hand, note that controller \eqref{eq:td-controller} inherits the communication pattern from the sparsity of $\TX,\TU$. Therefore, constraint \eqref{eq:comm-const} can be expressed as $\TU \in \s^C$. 



\section{Main Results}
\label{sec:main}
We derive the solution to Problem \eqref{eq:h2} in two parts. First, we present the \textit{synthesis} of the localized and distributed controller via CLMs using the SLS parameterization. The synthesis procedure naturally decomposes into smaller problems, allowing computation to only involve local information, thus favorably scales to large networks. The second part of the solution investigates the \textit{implementation} of the localized and distributed controller. We make explicit how decomposed local controllers subject to communication constraints achieve the global objective of stabilization and localization. 

\subsection{\textbf{Synthesis} of CLMs}
\label{sec:synthesis}
\subsection*{Step 1: Re-parameterization with CLMs} 
We substitute variables $\TU \b{w}$ and $\TX \b{w}$ in place of $\b{x}$ and $\b{u}$ as the optimization variable in Problem \eqref{eq:h2} by definition of CLMs. An equivalent re-parameterization is as follows:
\begin{subequations}
\begin{align}
& \underset{\TX,\TU}{\text{minimize}} 
& & \sum_{k = 0}^{\infty} \Trace\left( \tx[k]^TQ\tx[k]+\tu[k]^TR\tu[k] \right) \label{eq:step1} \tag{P1} \\
& \text{subject to}
& & \eqref{eq:td-feasibility1} , \eqref{eq:td-feasibility2}, \quad \TX \in \s^L, \quad \TU \in \s^C \nonumber
\end{align}
\end{subequations}
where 
we simplify the objective function in \eqref{eq:h2} using the fact that i.i.d. white noise $\b{w}$ has identity covariance. As addressed in Section \ref{sec:sls}, \eqref{eq:td-feasibility1} and \eqref{eq:td-feasibility2} characterize the space of CLMs achievable by an stabilizing controller $\b{K}$, thus replacing the equality constraints in \eqref{eq:h2}. 

\subsection*{Step 2: Column-wise Decomposition}
As a feature of SLS problems, \eqref{eq:step1} can be decomposed in a column-wise fashion when $\s^L$ and $\s^C$ are appropriately chosen \cite{wang2018separable}. The columns of $\TX$ and $\TU$ can be solved for in parallel and reconstructed to recover the solution to \eqref{eq:step1}. 
For each column $j\in [N_x]$, we denote $\TX^j$ and $\TU^j$ as the $j^{\text{th}}$ column of $\TX$ and $\TU$. The decomposed problem \eqref{eq:step1} for each $j\in[N_x]$ has the form:
\begin{subequations}
       \begin{align}
       & \underset{\TX^j,\TU^j \in \mathcal{RH}_\infty}{\text{minimize}} 
       & & \sum_{k = 0}^{\infty}  \tx^j[k]^TQ\tx^j[k]+\tu^j[k]^TR\tu^j[k] \label{eq:step2} \tag{P2} \\
       & \text{subject to} 
       & & \tx^j[0] = e_j \label{eq:initial}\\
       &&&\tx^j[t+1] = A \tx^j[t] + B \tu^j[t] \label{eq:col-sys} \\
       &&& \TX^j \in \s^L(:,j), \quad \TU^j \in \s^C(:,j).\label{eq:col-sparsity} 
       \end{align}
\end{subequations}
Recall that the $(k,j)^{\text{th}}$ position of $\TX$ represents the closed-loop transfer function from $\b{w}(j)$ to $\b{x}(k)$ with $k\in[N_x]$. Within the column vector $\TX^j$, we can identify $\TX^j(k)$ with position $k$'s associating to the states in subsystems in $\B(j)$. Moreover, since column vector $\TX^j$ and $\TU^j$ are constrained to the $j^{\text{th}}$ column of prescribed sparsity patterns $\s^L$ and $\s^C$ respectively, we can reduce \eqref{eq:step2} by removing zero entries other than those associated with $\B(j)$. We denote the reduced column vectors that contains the entries associated with $\B(j)$ as $\rTX^j$ and $\rTU^j$. Similarly, the problem parameters $A$, $B$, $Q$, $R$ can be reduced by selecting submatrices $A^{(j)}$, $B^{(j)}$, $Q^{(j)}$, and $R^{(j)}$ consisting of columns and rows associated with the boundary entries and non-zero entries of $\TX^j$ and $\TU^j$. Note these sub-matrices now contain only dynamics information from subsystems that are allowed to transmit information to the  $j^{\text{th}}$ state's subsystem. We further rearrange the reduced vectors and matrices in \eqref{eq:col-sys} by grouping the entries associated with boundary subsystems as follows:
\begin{equation}
\label{eqn:bdy}
\underbrace{\begin{bmatrix}
\inx\\ \bx
\end{bmatrix}[k+1]}_{\rtx^j[k+1]}
 = 
\underbrace{\begin{bmatrix}
A^{(j)}_{nn} & A^{(j)}_{nb}\\
A^{(j)}_{bn}  & A^{(j)}_{bb} 
\end{bmatrix}}_{A^{(j)}}
\begin{bmatrix}
        \inx\\ \bx
\end{bmatrix}[k]  + 
\underbrace{\begin{bmatrix}
B^{(j)}_{n} \\
B^{(j)}_{b} 
\end{bmatrix}}_{B^{(j)}}
\rtu^j[k]
\end{equation}
where $\bxb$ denotes the entries on column vector $\rTX^j$ that are associated with $\B(j)$ and $\inxb$ represents the nonzero entries of $\rTX^j$ that are not associated with boundary subsystems. Here, $A^{(j)}$ and $B^{(j)}$ are partitioned accordingly. With abuse of notation, We overload $\rTU^j $ to denote the rearranged and reduced vector $\TU^j$.
\begin{example}
        \label{ex:reduction}
        Consider the scalar chain example in Figure \ref{fig:example1} for the local problem with $j = 4$, \textit{i.e.}, the subproblem \eqref{eq:step2} corresponding to the fourth column of $\TX,\TU$. We have the constraint $\TX^4 = [0,$ $0,$ $\TX(3,4),$ $\TX(4,4),$ $\TX(5,4)]^T$ according to the fourth column of localization pattern $\s^L = \A$.  In this case, we have $\b{\tilde{\Phi}}^4_{x,b}= [\TX(2,4)]^T$ defined in Definition \ref{def:boundary} and $\b{\tilde{\Phi}}^4_{x,n} = [\TX(3,4), \,\, \TX(4,4), \,\, \TX(5,4)]^T$. Therefore, the rearranged and reduced vector is $\b{\rtx}^4 = [\TX(3,4),$ $\TX(4,4)$, $\TX(5,4)$, $\TX(2,4)]^T$.
\end{example}

Note that the first part of constraint \eqref{eq:col-sparsity} now becomes equivalent to the requirement that $\bxb$ remains at origin at all time for the localized region of $\b{w}^j$. This is because of the ``initial condition" \eqref{eq:initial}. By keeping the entries associated with boundary subsystems at zero, we implicitly impose that for all $k$, $ \Sp{A\tx^j[k]  +  B \tu^j[k]} \subseteq \s^L(:,j)$, which is necessary and sufficient to ensure $\TX^j \in \s^L(:,j)$. Therefore, the local problem \eqref{eq:step2} after rearrangement becomes
\begin{subequations}
        \begin{alignat}{2}
        \underset{\rTX^j,\rTU^j \in \mathcal{RH}_\infty}{\text{min}} \,\,\, \sum_{k = 0}^{\infty} & \rtx^j[k]^TQ^{(j)}\rtx^j[k]+\rtu^j[k]^TR^{(j)}\rtu^j[k]  \label{eq:step3} \tag{P3}\\
         \text{subject to}   \,\,\,\,\,\,\,\,\,\,\,\,\,\,\,\,\,\,\,\,& \rtx^j[0] = e_{j_j} ,\quad \eqref{eqn:bdy} \label{eq:init}\\
        & \bx[k] = 0, \forall k \label{eq:boundary}
        \end{alignat}
\end{subequations}
where $j_i$ denotes the new position of element $\TX(j,i)$ in the rearranged and reduced vector $\tilde{\b{\Phi}}_x^i$. Vectors $e_{j_i}$ have the same dimension as $\rTX^i$. We differentiate the position of element $\TX(j,i)$ in $\b{\tilde{\Phi}}^i_{x,n}$ with the notation $\tilde{j}_i$. Vectors $e_{\tilde{j}_i}$ has the same dimension as $\b{\tilde{\Phi}}^i_{x,n}$.
\begin{example} 
       Continuing Example \ref{ex:reduction} where $i,j = 4$, then $\TX(4,4)$ is in the \textit{second} position in rearranged and reduced vector $\b{\rtx}^4$. Thus, $j_4 = 2$, $e_{j_4} = \left[0, \, 1, \, 0, \, 0\right]^T$, and $\tilde{j}_4 = 2$ with $e_{\tilde{j}_4} = \left[0, \, 1, \, 0\right]^T$. Consider instead $j = 4$ and $i = 5$, then $\TX(4,5)$ is in the \textit{first} position in $\b{\tilde{\Phi}}^5_{x} = [\TX(4,5),\, \TX(5,5),\, \TX(3,5)]^T$ while it is also in the \textit{first} position in $\b{\tilde{\Phi}}^5_{x,n} = [\TX(4,5),\, \TX(5,5)]^T$. We then have $j_5 = 1$ with $e_{j_5} = [1,\,0,\,0]^T$ and $\tilde{j}_5 = 1$ with $e_{\tilde{j}_5} = [1,\,0]^T$.
\end{example}

\subsection*{Step 3: De-constraining Subproblems}
We now de-constrain \eqref{eq:step3} by characterizing CLMs that satisfy \eqref{eq:boundary}. We first substitute \eqref{eq:boundary} into \eqref{eqn:bdy} in \eqref{eq:step3} and conclude that \eqref{eq:boundary} is equivalent to requiring 
\begin{equation}
        \label{eq:lin-eq}
        -B^{(j)}_b \rTU^j = A^{(j)}_{bn} \inxb .   
\end{equation}
Due to the equality constraint \eqref{eq:init} and \eqref{eqn:bdy}, the free optimization variable is $\rTU^j$ in \eqref{eq:step3}. Therefore, \eqref{eq:lin-eq} has solutions $\rTU^j$ if and only if the following assumption holds:
\begin{assumption}
        \label{existence}
        $B^{(j)}_bB^{(j)\dagger}_b = I$.
\end{assumption}
Recall that constraint \eqref{eq:boundary} is sufficient and necessary for the CLMs to comply to the localization pattern $\s^L$. This means assumption \ref{existence} is the minimum requirement for the each local problems \eqref{eq:step3} to be localizable according to the local neighborhood specified by $\s^L$. Further, per Definition \ref{def:boundary}, the number of boundary subsystems can generally be less than the total dimension of control actions, \textit{i.e.}, $B_b^{(j)}$ is a wide matrix.

\begin{lemma}
        \label{lemma1}
        Under Assumption \ref{existence}, the parametrization 
        \begin{equation}\label{eqn:M-param}
        \rtu^j[k] = -B^{(j)\dagger}_b A^{(j)}_{bn} \inx[k] + \left( I-B^{(j)\dagger}_b B^{(j)}_b \right)v^{j}[k]\end{equation} 
        with $v^{j}[k]$ a free vector variable characterizes all $\rtu^j[k]$ that satisfies \eqref{eq:boundary}. 
\end{lemma}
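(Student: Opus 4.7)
The plan is to prove the lemma via the standard strategy for characterizing solutions to an underdetermined linear equation: show that the proposed parametrization is both (i) sufficient, so that every $\rtu^j[k]$ of the prescribed form satisfies the boundary constraint \eqref{eq:lin-eq}, and (ii) necessary, so that every solution of \eqref{eq:lin-eq} admits a representation in the proposed form for some choice of $v^j[k]$.

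For sufficiency, I would substitute the parametrization \eqref{eqn:M-param} directly into the left-hand side of \eqref{eq:lin-eq}. The first contribution is $B^{(j)}_b B^{(j)\dagger}_b A^{(j)}_{bn} \inx[k]$, which collapses to $A^{(j)}_{bn} \inx[k]$ by Assumption \ref{existence}. The second contribution is $-B^{(j)}_b \bigl(I - B^{(j)\dagger}_b B^{(j)}_b\bigr) v^{j}[k]$; applying Assumption \ref{existence} again (which implies the Moore--Penrose identity $B^{(j)}_b B^{(j)\dagger}_b B^{(j)}_b = B^{(j)}_b$) shows this term vanishes for every choice of $v^j[k]$. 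Thus \eqref{eq:lin-eq} holds irrespective of the free variable, establishing sufficiency.

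For necessity, given any $\rtu^j[k]$ satisfying \eqref{eq:lin-eq}, I would exhibit an explicit $v^j[k]$ recovering it. Choosing $v^j[k] := \rtu^j[k]$ and substituting into the right-hand side of \eqref{eqn:M-param}, I would regroup terms as $\rtu^j[k] - B^{(j)\dagger}_b \bigl(A^{(j)}_{bn} \inx[k] + B^{(j)}_b \rtu^j[k]\bigr)$; the parenthesized expression is identically zero by \eqref{eq:lin-eq}, leaving exactly $\rtu^j[k]$. Combined with sufficiency, this completes the characterization.

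No step here is genuinely difficult; the only pitfalls worth flagging are (a) not conflating Assumption \ref{existence}, namely $B^{(j)}_b B^{(j)\dagger}_b = I$, with the reverse identity $B^{(j)\dagger}_b B^{(j)}_b = I$, which fails in general since $B^{(j)}_b$ is wide and whose failure is precisely what allows $\bigl(I - B^{(j)\dagger}_b B^{(j)}_b\bigr)$ to be a nontrivial projector onto $\ker B^{(j)}_b$; and (b) recognizing that Assumption \ref{existence} plays the dual role of simultaneously guaranteeing the existence of a solution to \eqref{eq:lin-eq} (right-invertibility of $B^{(j)}_b$) and enabling the annihilation of the homogeneous term when multiplied by $B^{(j)}_b$.
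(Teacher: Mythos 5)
Your proof is correct and takes essentially the same route as the paper's: the paper compresses both directions into the single observation that $\mathrm{Range}\left(I-B^{(j)\dagger}_b B^{(j)}_b\right)= \mathrm{Kernel}\left(B_b^{(j)}\right)$, while you verify sufficiency and necessity by explicit substitution. One cosmetic note: the identity $B^{(j)}_b B^{(j)\dagger}_b B^{(j)}_b = B^{(j)}_b$ is a defining property of the pseudoinverse and holds unconditionally, so it need not be attributed to Assumption \ref{existence} (though it does also follow trivially from it).
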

\begin{proof}
 Under Assumption \ref{existence}, \eqref{eq:lin-eq} has solutions of the form \eqref{eqn:M-param}. This can be checked by confirming that $\text{Range}\left(I-B^{(j)\dagger}_b B^{(j)}_b\right)= \text{Kernel}\left(B_b^{(j)}\right)$. Substituting \eqref{eqn:M-param} in \eqref{eqn:bdy}, one can verify that $\bx[k] = 0$, $\forall k=1,2,\dots$. 
\end{proof}

The re-parametrization of optimization variable $\bu$ in \eqref{eq:step3} with $v^{j}$ allows us to express an equivalent local optimization problem without \eqref{eq:boundary}. Substitute \eqref{eqn:M-param} into \eqref{eq:step3}:
\begin{subequations}
\begin{align}
& \underset{\inxb,\b{v}^{j} \in \mathcal{RH}_\infty }{\text{min}}
&  \sum_{k = 0}^{\infty}& \inx[k]^T\tilde{Q}^{(j)}\inx[k]+v^{j\,T}[k]\tilde{R}^{(j)}v^{j}[k] \nonumber \\
& \text{subject to} 
& & \inx[0] = e_{\tilde{j}_j} \label{eq:step4} \tag{P4}\\
&&& \inx[k+1] = \tilde{A}^{(j)}\inx[k] + \tilde{B}^{(j)}v^{j}[k]  \nonumber
\end{align}
\end{subequations}
where
\begin{align*}
  \tilde{R}^{(j)} &= \left(\left(R^{(j)}\right)^{\frac12}\left(I-B^{(j)\dagger}_b B^{(j)}_b\right)\right)^T \\
  &\,\,\,\,\,\,\,\,\,\,\,\left(\left(R^{(j)}\right)^{\frac12}\left(I-B^{(j)\dagger}_b B^{(j)}_b\right)\right)\\
        \tilde{Q}^{(j)} &= \left((Q^{(j)})^{\frac12} -(R^{(j)})^{\frac12} B^{(j)\dagger}_b A_{bn}^{(j)}\right)^T \\
        &\,\,\,\,\,\,\,\,\,\,\,\left((Q^{(j)})^{\frac12}-(R^{(j)})^{\frac12}B^{(j)\dagger}_b A_{bn}^{(j)}\right) \\
    \tilde{A}^{(j)} &= A^{(j)}_{nn}-B^{(j)}_n B^{(j)\dagger}_b A^{(j)}_{bn}\\
    \tilde{B}^{(j)} &= B^{(j)}_n \left(I-B^{(j)\dagger}_bB^{(j)}_b \right).
\end{align*}

\subsection*{Step 4: Local Riccati Solutions}
For each column $j$ with $j \in [N_x]$, problem \eqref{eq:step4} can be treated as an infinite horizon LQR problem with which an optimal "control policy" $\tilde{K}^{(j)}$ can be computed in closed form via discrete-time algebraic Riccati equation (DARE): 
\begin{align*}
    \tilde{K}^{(j)} = -\left(\tilde{R}^{(j)} + \tilde{B}^{(j)T} X^{(j)}\tilde{B}^{(j)}\right)^{-1}\tilde{B}^{(j)T} X^{(j)}\tilde{A}^{(j)},
\end{align*}
where $X^{(j)}$ is the Riccati solution to the DARE:
\begin{align*}
    X^{(j)} =  &\,\, \tilde{Q}^{(j)} + \tilde{A}^{(j)T}X^{(j)}\tilde{A}^{(j)}- \tilde{A}^{(j)T}X^{(j)}\tilde{B}^{(j)} \nonumber \\
    & \left(\tilde{R}^{(j)}+\tilde{B}^{(j)T}X^{(j)}\tilde{B}^{(j)}\right)^{-1}\tilde{B}^{(j)T}X^{(j)}\tilde{A}^{(j)}.
\end{align*}
With optimal solutions $v^j[k] = \tilde{K}^{(j)} \inx[k]$ to \eqref{eq:step4}, solutions to \eqref{eq:step3} can be recovered via \eqref{eqn:M-param} as:
\begin{align}
        \inx[0] &=\,\, e_{\tilde{j}_j}  \label{eq:recover}\\
        \rtu^j[k] &= \left(-B^{(j)\dagger}_b A^{(j)}_{bn} + \left(I-B^{(j)\dagger}_b B^{(j)}_b\right)\tilde{K}^{(j)}\right)\inx[k] \nonumber \\
        \inx[k] &= \left(\tilde{A}^{(j)} + \tilde{B}^{(j)}\tilde{K}^{(j)}\right)\inx[k-1]. \nonumber
\end{align}
Note the optimal solution to \eqref{eq:step4} via the Riccati equation is stable, so $\b{v}^j$ and $\inxb$ construct stable and proper transfer matrices.

In summary, we went through a series of transformations and decompositions from the original localized and distributed state feedback $\mathcal{H}_2$ problem \eqref{eq:h2} to \eqref{eq:step4}. Indeed, given solutions to the local problems \eqref{eq:step4}, solutions to \eqref{eq:h2} can be recovered. In particular, we define embedding operator $E_x(\cdot)$ and $E_u(\cdot)$ that apply padding of zero's to the reduced vectors $\inxb$ and $\rTU^j$ by assigning entries of $\inxb$ and $\rTU^j$ to the positions of nonzero elements of $\tx(:,j)$ and $\tu(:,j)$ such that $E_x\left( \inxb \right) \in \mathbb{R}^{N_x}$ and $E_u\left( \rTU^j \right) \in \mathbb{R}^{N_u}$. 
\begin{example}
        \label{ex:padding}
        Consider the reduced vector $ \tilde{\b{\Phi}}^4_{x,n} = \left[\TX(3,4), \,\, \TX(4,4), \,\, \TX(5,4)\right]^T$ for $j = 4$ in Example \ref{ex:reduction}. Applying the embedding operator, we have that $E_x\left(\tilde{\b{\Phi}}^4_{x,n}\right) = \left[ 0, \,0 \,,\TX(3,4), \,\, \TX(4,4), \,\, \TX(5,4)  \right]^T$, which recovers $\TX^4$ respecting the sparsity of $\s^L(:,4)$. Similarly, $e_{\tilde{j}_j} = \left[0, \, 1, \, 0\right]^T$ and $E_x\left(e_{\tilde{j}_j}\right) = e_j = \left[ 0, \, 0,\, 0,\, 1,\, 0 \right]^T$.
\end{example}
\begin{theorem}
        Let $\b{\Phi}_x^*$ be the column-wise concatenation of $E_x\left(\inxb \right)$ and let $\b{\Phi}_u^*$ be the column-wise concatenation of  $E_u \left(\rTU^j \right)$ with $\inxb$'s and $\rTU^j$'s recovered from the solution to \eqref{eq:step4} via \eqref{eq:recover}. Then  $\b{\Phi}_x^*$ and  $\b{\Phi}_u^*$ minimize \eqref{eq:step1}.
\end{theorem}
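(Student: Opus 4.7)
The proof plan is to chain together the four equivalences established by Steps 1--4 of Section IV-A and then argue that the padded, column-wise reconstruction satisfies the constraints of \eqref{eq:step1} with optimal per-column cost.

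First, I would verify that \eqref{eq:step1} decomposes column-wise into $N_x$ independent subproblems of the form \eqref{eq:step2}. Three observations suffice: (i) the trace objective splits as $\sum_{k=0}^\infty \Trace(\tx[k]^T Q \tx[k] + \tu[k]^T R \tu[k]) = \sum_{j=1}^{N_x} \sum_{k=0}^\infty \tx^j[k]^T Q \tx^j[k] + \tu^j[k]^T R \tu^j[k]$; (ii) the affine SLS constraints \eqref{eq:td-feasibility1}--\eqref{eq:td-feasibility2} act on each column independently with the right-hand initial condition $\tx^j[0] = e_j$; (iii) by Definition \ref{def:clm-sparsity}, the sparsity constraints $\TX \in \s^L$ and $\TU \in \s^C$ are column-separable into $\TX^j \in \s^L(:,j)$ and $\TU^j \in \s^C(:,j)$. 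Since columns share no decision variable, a collection of per-column minimizers is jointly optimal for \eqref{eq:step1}.

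Second, I would show that each per-column problem \eqref{eq:step2} is equivalent to the rearranged and reduced problem \eqref{eq:step3}. The reduction removes entries of $\TX^j$ and $\TU^j$ that are forced to zero by $\s^L(:,j)$ and $\s^C(:,j)$, and the remaining dynamics on the surviving coordinates are exactly \eqref{eqn:bdy}. The crucial claim is that the boundary condition $\bx[k] = 0$ for all $k$ is equivalent to $\TX^j \in \s^L(:,j)$: given $\tx^j[0] = e_j$ respecting $\s^L(:,j)$, forcing the boundary entries to remain zero propagates the sparsity pattern recursively via \eqref{eq:col-sys}, since the support of $A\tx^j[k] + B\tu^j[k]$ cannot reach outside the localized region without first passing through $\B(j)$. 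Then, by Lemma \ref{lemma1} and Assumption \ref{existence}, substituting the parametrization \eqref{eqn:M-param} eliminates the boundary constraint \eqref{eq:boundary} and yields the unconstrained subproblem \eqref{eq:step4} with the stated $\tilde{A}^{(j)},\tilde{B}^{(j)},\tilde{Q}^{(j)},\tilde{R}^{(j)}$.

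Third, I would argue that \eqref{eq:step4} is a standard infinite-horizon LQR problem with initial state $e_{\tilde{j}_j}$, whose optimum is attained by the stationary policy $v^j[k] = \tilde{K}^{(j)} \inx[k]$ obtained from the DARE, provided $(\tilde{A}^{(j)},\tilde{B}^{(j)})$ is stabilizable and $\tilde{Q}^{(j)} \succeq 0$. Stabilizability of the local pair inherits from stabilizability of $(A,B)$ together with Assumption \ref{existence}, while $\tilde{Q}^{(j)} \succeq 0$ is immediate from its factored form. This ensures that the recovered transfer matrices $\b{v}^j$ and $\inxb$ built via \eqref{eq:recover} are stable and proper, and hence so is $\rTU^j$. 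Consequently, the padded CLMs $\b{\Phi}_x^* = [E_x(\inxb^1)\,\cdots\,E_x(\inxb^{N_x})]$ and $\b{\Phi}_u^* = [E_u(\rTU^1)\,\cdots\,E_u(\rTU^{N_x})]$ lie in $\mathcal{RH}_\infty$, satisfy \eqref{eq:td-feasibility1}--\eqref{eq:td-feasibility2}, and by construction conform to $\s^L$ and $\s^C$; thus they are feasible for \eqref{eq:step1}. Because each column achieves the minimum of its decoupled subproblem \eqref{eq:step2}, the aggregate cost is minimized, proving the theorem.

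The main obstacle I expect is the careful bookkeeping in Step 2: justifying that restricting to the reduced vectors $\rTX^j,\rTU^j$ and imposing the boundary equality \eqref{eq:lin-eq} is both \emph{necessary} and \emph{sufficient} for $\TX^j \in \s^L(:,j)$, and simultaneously checking that the stabilizability hypothesis needed to invoke the DARE holds column-wise. Once these two points are handled, the rest of the argument is a routine composition of the equivalences above.
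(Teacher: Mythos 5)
Your proposal is correct and follows essentially the same route as the paper: the proof hinges on the column-wise separability of the objective and constraints of \eqref{eq:step1}, the optimality of each reduced subproblem's solution, and the observation that the zero-padded embeddings $E_x(\inxb)$, $E_u(\rTU^j)$ reconstruct feasible, per-column-optimal columns of $\TX$, $\TU$. The only difference is one of exposition — the paper dispatches separability by citing the column-wise separable problem framework of \cite{wang2018separable} and treats the \eqref{eq:step2}--\eqref{eq:step4} equivalences as already established in the preceding derivation, whereas you re-verify them inside the proof (and, to your credit, you flag the column-wise stabilizability of $(\tilde{A}^{(j)},\tilde{B}^{(j)})$ needed for the DARE, a hypothesis the paper leaves implicit).
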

\begin{proof}
        It is straight forward to check that optimization \eqref{eq:step1} is an instance of \textit{column-wise separable problem} (Section III, \cite{wang2018separable}) where both the objective function and constraints are column-wise separable and can be partitioned and solved in columns as in \eqref{eq:step2} in parallel. Therefore, solutions to subproblem \eqref{eq:step2} can be concatenated to recover the solution to \eqref{eq:step1}. Note that by construction, $E_x\left(\inxb \right) = \TX^j$ and  $E_u \left(\rTU^j \right) = \TU^j$ comprise the optimal solution to \eqref{eq:step2} for each $j$. Concatenate $E_x\left(\inxb \right)$'s and $E_u \left(\rTU^j \right)$'s in a column-wise fashion and the resulting matrices are solutions to \eqref{eq:step1}.
\end{proof}

\subsection{Controller Realization \& Implementation}
A second design requirement is the distributed implementation of the the controller that achieves localized closed-loop. Given CLMs $\TX$, $\TU$ synthesized in Section \ref{sec:synthesis}, we can directly conclude that \textit{theoretically}, $\b{K} = \TU\left(\TX\right)^{-1}$ with implementation \eqref{eq:td-controller} achieves the given CLMs $\TX,\TU$ and conforms to the communication constraint according to $\s^C$. This is because the inheritance of sparsity structures of the controller implementation from CLMs by Theorem \ref{thrm:SLS}. Interested readers are referred to \cite{tseng2019deployment} for in-depth discussion on implementation of SLS controllers for cyber-physical systems. However, due to the state-space form of solutions from \eqref{eq:step4}, \textit{practical} implementation of a controller that achieves the \textit{theoretical} global CLMs remains elusive. 

We decompose the global SLS controller \eqref{eq:td-controller} into $N_x$ sub-controllers using the solution to \eqref{eq:step3}. The global control action $u[t]$ can be accordingly decomposed into $ N_x$ "sub-control actions". These sub-control actions will be computed using solutions from \eqref{eq:step3}. These sub-control actions are then assembled together to form a global control action. Importantly, the computation of each sub-control action conforms to communication constraint $\s^c$. We now make precise of this high-level description.

To ease notation, we denote $x_\ell[t] \in \mathbb{R}$ and $w_\ell[t] \in \mathbb{R}$, for $\ell \in \left[N_x\right]$ as the $\ell^{\text{th}}$ position in the state and disturbance vector $x[t]$ and $w[t]$ in the global dynamics \eqref{eq:network_sys}, respectively. Further, we define the indices associated with the state vector $x^j \in \mathbb{R}^{n_j}$ of subsystem $j \in [N]$ as $\X(j) := \{ \ell \in \left[N_x\right] \,|\, x_\ell \in x^j \}$. Thus, $\X(j)$'s partition the global state vector $x[t]$ in \eqref{eq:network_sys} into $N$ sets containing the states associated with the $N$ subsystems. Conversely, we use $\X^{-1}(\ell)$ to denote the subsystem index to which state $x_\ell$ belongs.

For each $\ell \in \left[N_x\right]$, we compute the sub-control action vector $u_\ell$, which has the same vector dimension as $\tilde{\Phi}_u^\ell$, as:
\begin{subequations}
        \label{eq:sub-controller}
        \begin{align}
                \hat{w}_\ell[t] &= x_\ell[t] - \sum_{i \in \N^w(\ell)  } \xi_i[t] \left(\tilde{\ell}_i\right) \label{eq:subcontroller1}\\
                \xi_\ell[t+1] &= \AK \xi_\ell[t] + \BK \hat{w}_\ell[t] \label{eq:k-1}\\
                u_\ell[t] &= \CK \xi_\ell[t] + \DK \hat{w}_\ell[t], \label{eq:k-2}
        \end{align}
\end{subequations}
where $\hat{w}_\ell[t] \in \mathbb{R}$ can be considered as an estimate of disturbance $w_\ell[t]$. Internal state $\xi_\ell[t]$ of each sub-controller has the same dimension as $\tilde{\Phi}^\ell_{x,n}[t] $ and $\xi_i[t] \left(\tilde{\ell}_i\right)$ denotes the $\tilde{\ell}_i^{\,\,\text{th}}$ element in the internal state vectors $\xi_i$. Note that controller internal variables have initial condition $\hat{w}_\ell[0] = x_\ell[0]$ and $\xi_\ell[0] = \underbar{0}$. We also define the set $\N^w(\ell)$ as $\N^w(\ell):= \left\{ i \in \left[ N_x\right] |\, S^L\left(\X^{-1}(\ell), \X^{-1}(i)\right) \not = 0 \right\}.$
In particular, the set $\N^w(\ell)$ contains global indices $i \in \left[ N_x \right] $ such that $x_i$ is a state that is allowed to communicate its information to the subsystem that contains state $x_\ell$, conforming to communication pattern $\s^C$. The compliance to the communication constraint is due to the fact that $\s^L \subset \s^C$.

\begin{figure}[h]
        \centering
        \includegraphics[scale = 0.65]{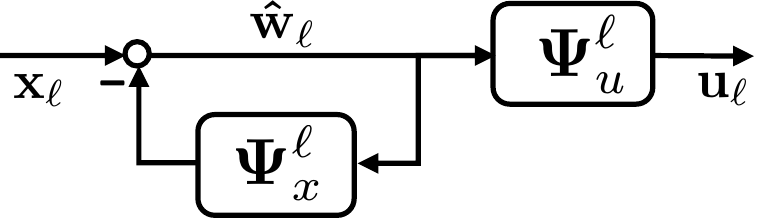}
        \caption{Column-wise sub-controller implementation for global controller $\b{K} = \TU(\TX)^{-1}$. $\b{x}_\ell$ is the $\ell^{\text{th}}$ state, $\hat{\b{w}}_\ell$ is the estimated $\ell^{\text{th}}$ disturbance, and $\b{u}_\ell$ is the sub-control actions induced by $\ell^{\text{th}}$ state's deviation from origin.}
        \label{fig:controller}
\end{figure}
Equation \eqref{eq:k-1} and \eqref{eq:k-2} are the sub-controller internal dynamics specified by $\left(\AK,\BK,\CK,\DK\right)$ that takes in estimated disturbance $\hat{w}_\ell$ and output decomposed control actions $u_\ell$. The internal dynamics for the $l^{\mathrm{th}}$ sub-controller are:
\begin{align*}
\AK &= \tilde{A}^{(\ell)} + \tilde{B}^{(\ell)}\tilde{K}^{(\ell)},\quad \BK = \left( \tilde{A}^{(\ell)} + \tilde{B}^{(\ell)}\tilde{K}^{(\ell)} \right) e_{\tilde{\ell}_\ell}\\
\CK &= -B^{(\ell)\dagger}_b A^{(\ell)}_{bn} + \left(I-B^{(\ell)\dagger}_b B^{(\ell)}_b\right)\tilde{K}^{(\ell)}\\
\DK &= \left( -B^{(\ell)\dagger}_b A^{(\ell)}_{bn} + \left(I-B^{(\ell)\dagger}_b B^{(\ell)}_b\right)\tilde{K}^{(\ell)} \right) e_{\tilde{\ell}_\ell}.
\end{align*}
Referring to \eqref{eq:recover}, it is straight forward to verify that \eqref{eq:sub-controller} is indeed the state space realization of each decomposed SLS controller implementing the reduced $\ell^{\text{th}}$ column of $\TX$ and $\TU$ synthesized from \eqref{eq:step3}. In particular, \eqref{eq:sub-controller} implements a transfer function mapping from scalar signal $\b{x}_\ell$ to vector signal $\b{u}_\ell$. Further, each sub-controller is stable since $\AK$ is Hurwitz. The block diagram of this transfer function is shown in Figure \ref{fig:controller}, where:
        \begin{equation}
                \label{eqn:local-controller}
        \b{\Psi}^\ell_x = \left[
        \begin{array}{c|c}
        \AK & \BK \\
        \hline
         I & 0 \\ 
        \end{array}
        \right],
        \quad
        \b{\Psi}^\ell_u = \left[
        \begin{array}{c|c}
        \AK & \BK  \\
        \hline
        \CK&
        \DK \\ 
        \end{array}
        \right].
        \end{equation}
For each state $\ell^{\text{th}}$ state $\b{x}_\ell$ deviating from the origin due to disturbance $\b{w}_\ell$, it invokes subsystems $j \in \N^C(\ell)$ to transmit information among each other in order to generate a \textit{collaborative} sub-control action $\b{u}_\ell$ from these subsystems. Moreover, internal dynamics \eqref{eq:k-1}, \eqref{eq:k-2} of each $\ell$ sub-controller involves only the global dynamics associated with subsystems $j \in \N^C(\ell)$. Therefore, by definition of $\N^C(\ell)$, we conclude that each sub-controller's implementation conforms to the communication pattern specified by $\s^C$. By the superposition property of the input-output behaviors of linear systems, we can sum over all the sub-control actions induced by each $\b{w}_\ell$ and the global control action $u[t] \in \mathbb{R}^{ N_u}$ is: 
\begin{equation}
        \label{eq:global-controller}
        u[t] = \sum_{i = 1}^{N_x} E_u(u_\ell[t]),
\end{equation}
where each sub-control action $\b{u}_\ell$, which has the same vector dimension as $\tilde{\Phi}_u^\ell$ can be appropriately padded with zeros using the linear operator $E_u(\cdot)$ to recover a vector dimension in $\mathbb{R}^{ N_u}$ as in Example \ref{ex:padding}.

The following result confirms that collectively, the sub-controllers indeed achieve the prescribed global behaviors.
\begin{theorem}
        controller implemented \eqref{eq:sub-controller} and \eqref{eq:global-controller} defined by solutions to \eqref{eq:step3} is internally stabilizing for \eqref{eq:network_sys} and achieves the closed-loop mappings $\TX$ and $\TU$ constructed by stacking in a column-wise fashion the solutions to \eqref{eq:step3}.
\end{theorem}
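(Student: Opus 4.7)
My plan is to prove the theorem in three steps: (i) verify that each sub-controller $\ell$ in \eqref{eq:sub-controller} realizes, as a transfer function from $\hat{\mathbf{w}}_\ell$ to $\mathbf{u}_\ell$ (respectively to $\xi_\ell$), exactly the reduced column $\rTU^\ell$ (respectively $\inxb$) recovered via \eqref{eq:recover}; (ii) show by induction on $t$ that $\hat{w}_\ell[t] = w_\ell[t]$ for every $\ell$ and every $t \geq 0$, so the sub-controllers are driven by the true disturbance; and (iii) combine these facts via linear superposition and the embedding operators $E_x(\cdot), E_u(\cdot)$ to conclude $\mathbf{u} = \TU \mathbf{w}$ and $\mathbf{x}=\TX \mathbf{w}$, with internal stability and communication compliance falling out immediately.

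For step (i), unrolling \eqref{eq:k-1} from $\xi_\ell[0]=\underbar{0}$ with $\hat{w}_\ell$ acting as input gives $\xi_\ell[t] = \sum_{k=0}^{t-1} \AK^{t-1-k} \BK \hat{w}_\ell[k]$, and since $\BK = (\tilde{A}^{(\ell)}+\tilde{B}^{(\ell)}\tilde{K}^{(\ell)}) e_{\tilde{\ell}_\ell}$, a unit impulse at $k=0$ produces exactly the sequence $\inxb[t]$ of \eqref{eq:recover} (the initial condition $e_{\tilde{j}_\ell}$ is pushed through one step of the closed-loop dynamics by $\BK$). Plugging into \eqref{eq:k-2} with $\CK,\DK$ matching the linear combination in \eqref{eqn:M-param} shows the impulse response to $\hat{w}_\ell$ is precisely $\rTU^\ell[t]$. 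Embedding with $E_x, E_u$ then yields the $\ell$-th columns of the synthesized $\TX$ and $\TU$.

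For step (ii), at $t=0$ the sum in \eqref{eq:subcontroller1} is empty so $\hat{w}_\ell[0] = x_\ell[0] = w_\ell[0]$. Assume $\hat{w}_i[\tau] = w_i[\tau]$ for all $i$ and $\tau < t$. By step (i) and linearity, $\xi_i[t](\tilde{\ell}_i)$ equals the $\ell$-th scalar entry of $E_x(\inxb)$ convolved with past $w_i$, which is precisely the $(\ell,i)$-entry of $\sum_{k\geq 1}\tx[k]w_i[t-k]$. Summing over $i \in \N^w(\ell)$ and using the sparsity of $\TX$ with respect to $\s^L$ recovers $x_\ell[t] - w_\ell[t]$, which is exactly what \eqref{eq:network_sys} combined with the SLS achievability identity \eqref{eq:td-feasibility2} asserts. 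Rearranging \eqref{eq:subcontroller1} closes the induction. Step (iii) is then immediate: superposition of \eqref{eq:global-controller} gives $u[t] = \sum_\ell E_u(\rtu^\ell \ast w_\ell)[t] = (\TU \mathbf{w})[t]$, and the state response matches $\TX \mathbf{w}$ by \eqref{eq:td-feasibility2}. Internal stability follows because each $\AK$ is Hurwitz by the DARE solution, so each $\xi_\ell$ and $\hat{w}_\ell$ decays; compliance with $\s^C$ is inherited from $\N^w(\ell)$ and $\s^L \subseteq \s^C$.

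The main obstacle I anticipate is the bookkeeping in step (ii), specifically matching the scalar-indexed coordinates $\xi_i[t](\tilde{\ell}_i)$ to the embedded column $E_x(\tilde{\Phi}^i_{x,n})$ and showing that the neighbor sum collapses correctly against the global recursion \eqref{eq:network_sys}. Once this index-reconciliation is done, the remaining manipulations are mechanical state-space and superposition computations.
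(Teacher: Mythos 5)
Your proposal is correct in substance, and the key computational identity is the same one the paper uses --- namely that, by unrolling \eqref{eq:k-1} from $\xi_\ell[0]=\underbar{0}$ and matching $(\AK,\BK,\CK,\DK)$ against \eqref{eq:recover}, one gets $\xi_i[t](\tilde{\ell}_i) = \sum_{k=1}^{t}\tx(\ell,i)[k]\,\hat{w}_i[t-k]$ and $E_u(u_i[t]) = \sum_{k=0}^{t}\tu^i[k]\,\hat{w}_i[t-k]$. Where you diverge is in the logical architecture: the paper uses these identities only to show that \eqref{eq:sub-controller}--\eqref{eq:global-controller} is input--output equivalent to the \emph{centralized} realization \eqref{eq:td-controller}, and then delegates internal stability and achievement of $\TX,\TU$ wholesale to Theorem \ref{thrm:SLS}; you instead re-derive achievability from scratch via the induction $\hat{w}_\ell[t]=w_\ell[t]$ followed by superposition. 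Your route is more self-contained (it does not lean on the black-box guarantee of Theorem \ref{thrm:SLS}), at the cost of redoing the achievability argument; the paper's route is shorter but leaves the reader to unpack Theorem \ref{thrm:SLS} to see why equivalence of implementations suffices. One caution on your step (ii): as stated, your induction hypothesis carries only $\hat{w}_i[\tau]=w_i[\tau]$ for $\tau<t$, but the step ``summing over $i\in\N^w(\ell)$ recovers $x_\ell[t]-w_\ell[t]$'' tacitly uses $x[t]=\sum_{k=0}^{t}\tx[k]w[t-k]$, which is part of what is being proved. This is repairable --- strengthen the hypothesis to also carry $x[\tau]=\sum_{k}\tx[k]w[\tau-k]$ and $u[\tau]=\sum_k\tu[k]w[\tau-k]$ for $\tau<t$, then propagate one step through \eqref{eq:network_sys} using \eqref{eq:td-feasibility1}--\eqref{eq:td-feasibility2} --- but without that strengthening the induction is circular. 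The stability claim should likewise be phrased as: all internal signals ($x$, $u$, $\xi_\ell$, $\hat{w}_\ell$) are outputs of $\mathcal{RH}_\infty$ maps driven by $\b{w}$ (each $\AK$ Hurwitz and $\TX,\TU\in\mathcal{RH}_\infty$), rather than ``each $\hat{w}_\ell$ decays,'' since $\hat{w}_\ell=w_\ell$ does not decay.
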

\begin{proof}
Recall Theorem \ref{thrm:SLS}, where an internally stabilizing controller that realizes given closed-loop maps $\TX$ and $\TU$ has centralized implementation \eqref{eq:td-controller}. Therefore, we establish the equivalence between global control action $u[t]$ generated from \eqref{eq:td-controller} and $u[t]$ generated from \eqref{eq:global-controller}. Consider \eqref{eq:sls-2} where the controller's internal state $\b{\hat{w}}$ has dynamics
\begin{align*}
        \hat{w}[t] &= x[t] - \sum_{k = 1}^{t} \tx[k] \hat{w}[t-k] \\
                  &= x[t] - \sum_{i = 1}^{N_x} \sum_{k = 1}^{t} \tx^i[k] \hat{w}_i[t-k].
\end{align*}
For each $\ell^{\text{th}}$ position in $\hat{w}[t]$, due to the localization sparsity pattern $\s^L$ imposed on $\TX$, the scalar dynamics is 
\begin{align*}
        \hat{w}_\ell[t] = x_\ell[t] - \sum_{i \in \N^w(\ell)} \sum_{k = 1}^{t} \tx(\ell,i)[k] \,\hat{w}_i[t-k].
\end{align*}
Since $\TX^i$ for all $i \in [N_x]$ are recovered from \eqref{eq:recover} via the linear operators $E_x(\cdot)$, it is straight forward to verify that 
$$\sum_{k = 1}^{t} \tx(\ell,i)[k] \,\hat{w}_i[t-k] = \xi_\ell[t](\tilde{\ell}_i), \quad \text{for } t = 1,2,\dots.$$
We therefore conclude that \eqref{eq:sls-2} and \eqref{eq:subcontroller1},\eqref{eq:k-1} are equivalent. Similarly, re-write \eqref{eq:sls-1} as
$$ u[t] =  \sum_{i=1}^{N_x} \sum_{k = 0}^t \tu^i[k] \hat{w}_i[t-k] .$$
According to \eqref{eq:recover}, one can check that
$\sum_{k = 0}^t \tu^i[k] \hat{w}_i[t-k] = E_u(u_\ell[t]),$
thus verifying the equivalence between \eqref{eq:sls-1} and \eqref{eq:k-1},\eqref{eq:k-2},\eqref{eq:global-controller}. 
\end{proof}

The intuition behind sub-controllers is that at every time step, the global controller actions are decomposed into $\ell^{\text{th}}$ sub-control actions that only attenuate the $\ell^{\text{th}}$ disturbance, \textit{i.e.}, $\b{w}_\ell$. Therefore, whenever $\b{w}_\ell$ enters the system, only subsystems in the localized region of this disturbance reacts, computing the sub-control actions using only local information available according to $\s^L$.

\begin{remark}
        We presented solution to the localized and distributed $\mathcal{H}_2$ problem under instantaneous information exchange among subsystems according to $\s^C$. Our methodology can be extended to the case where information transmission is delayed. In particular, one can employ the state-space augmentation by introducing fictitious relay subsystems that have trivial dynamics and do not have associated cost nor noise \cite{lamperski2015optimal}. An efficient representation and computation of solution to delayed systems will be future work.
\end{remark}

\begin{remark}
        Results in this paper concurrently solve the general qudratic-cost infinite-horizon SLS problem with explicit state space controller implementation \cite{wang2019system}, while previous results have \cite{jensen2020topics} considered specific problems such as consensus. 
        
\end{remark}

\section{Simulations}
\label{sec:simulation}
We validate our results and highlight the advantage of the proposed infinite-horizon $\mathcal{H}_2$ controller\footnote{ The code for the simulation can be found at \href{https://github.com/jy-cds/infinite_horizon_sls.git}{this GitHub repository}}. Consider a bi-directional scalar chain system parametrized by $\alpha$ and $\rho$: 
\begin{equation*}
        x^i[t+1] = \rho (1-2\alpha ) x^i[t] + \rho \alpha \sum_{j \in \{ i\pm 1\}} x^j[t] + u^i[t] + w^i[t]
    \end{equation*}
The parameter $\rho$ characterizes the stability of the overall system while $\alpha$ decides how coupled the dynamics between each node is. The $i^{\text{th}}$ state in the global state vector $\b{x}$ is dynamically coupled to its nearest neighbors.The localization and communication constraints in this case are chosen to be $(A,d)$-sparse and $(A,d+1)$-sparse, respectively (for details, see section II-B in \cite{wang2014localized}) with $d$ specifying how many neighbors a disturbance can spread to. 

Figure \ref{fig:cost_comparison} shows that that the proposed infinite-horizon $\mathcal{H}_2$ controller outperforms previous FIR SLS controllers, which uses finite-horizon approximation to solve for suboptimal controllers to the localized and distributed $\mathcal{H}_2$ problem. As the finite horizon grows larger, the FIR SLS controller's cost approaches the optimal cost achieved by the proposed method. 

Figure \ref{fig:cost_vs_dimension} demonstrates the computation time reduction from the proposed method, compared to previous finite-horizon solutions. Since both finite-horizon SLS controller \cite{wang2018separable} and the proposed method can be synthesized in a distributed and localized way, we compare the computation time where all columns of the CLMs solutions are computed in \textit{parallel}. In general, each of the parallel $N_x$ subproblem for FIR SLS controller computation involves $N_x(N_x + N_u)T$ optimization variables where $T$ is the FIR horizon. On the other hand, the proposed infinite-horizon method only requires $N_x$ parallel solutions to the Riccati equations of size $\tilde{n}_i$'s, which are the sizes of the reduced columns in \eqref{eq:step4}.

\begin{figure}[h]
 \centering
 \includegraphics[clip, trim=1.6cm 8.3cm 0.5cm 9cm,scale = 0.43]{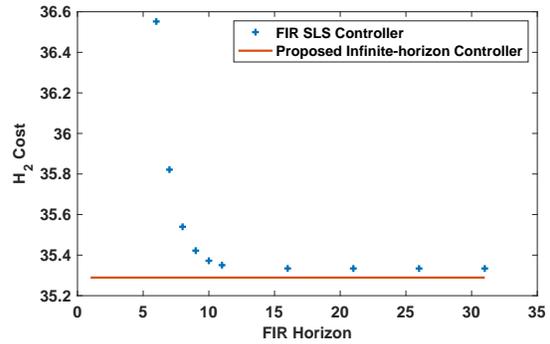}
 \caption{$\mathcal{H}_2$ Cost comparison between the FIR SLS controller \cite{wang2019system} and the infinite-horizon SLS controller proposed in this paper. Clearly, as the FIR horizon becomes larger, the FIR controller's cost converges to the infinite-horizon optimal controller. In this example, we have an 20-node unstable chain system with $\alpha = 0.4$ and $\rho = 1.25$ with 50\% actuation where only every other subsystem has control authority ($\b{u}^i \not = 0$). We impose $(A,d)$ and $(A,d+1)$ sparsity on the localization and communication pattern respectively with $d = 5$. Note when FIR horizon is less than $T = 6$, it is infeasible to find the FIR SLS controller. }
 \label{fig:cost_comparison}
\end{figure}

\begin{figure}[h]
 \centering
 \includegraphics[clip, trim=1.5cm 8.3cm 0.5cm 9cm,scale = 0.43]{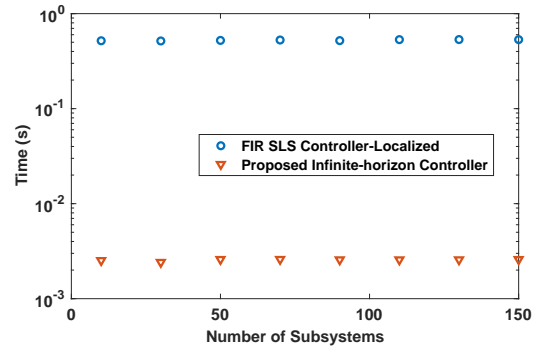}
 \caption{The proposed infinite-horizon SLS controller has a clear advantage in computation time over FIR SLS controllers. Here we have fixed $T = 10$ as the FIR horizon.}
 \label{fig:cost_vs_dimension}
\end{figure}

\section{Conclusion}
We propose and derive the solution to the localized and distributed $\mathcal{H}_2$ problem in this paper. Our result generalizes previous methods that uses finite-horizon approximation and make explicit the distributed implementation of the controller. In particular, the derivation in this paper also present an infinite-horizon SLS controller.

\section{ACKNOWLEDGMENTS}
J.Y. thanks Dimitar Ho for helpful discussions.

\bibliographystyle{IEEEtran}
\bibliography{reference.bib}

\end{document}